\documentclass[11pt]{article}

\usepackage[a4paper,margin=27mm]{geometry}
\usepackage[T1]{fontenc}
\usepackage{lmodern}
\usepackage{microtype}
\usepackage{amsmath,amssymb,amsthm,mathtools}
\usepackage{booktabs,array}
\usepackage{enumitem}
\usepackage{xcolor}
\usepackage[unicode,colorlinks=true,linkcolor=blue!45!black,
  citecolor=blue!45!black,urlcolor=blue!45!black]{hyperref}
\usepackage[nameinlink,noabbrev]{cleveref}

\newtheorem{theorem}{Theorem}[section]
\newtheorem{proposition}[theorem]{Proposition}
\newtheorem{lemma}[theorem]{Lemma}
\newtheorem{corollary}[theorem]{Corollary}
\theoremstyle{definition}

\newcommand{\F}{\mathbb F}
\newcommand{\K}{\mathbb K}
\newcommand{\rev}{\operatorname{rev}}
\newcommand{\M}{\mathsf M}
\newcommand{\B}{\mathsf B}
\newcommand{\E}{\mathsf E}
\newcommand{\I}{\mathsf I}
\newcommand{\LinComb}{\mathsf{LinComb}}

\title{Generic Characteristic-Zero Equivalence Between\\
Derivative B\'ezout Inversion and Multipoint Evaluation}
\author{Zijian Zeng\\
\small Institute of Computer Science and Digital Innovation,\\
\small UCSI University, Kuala Lumpur, 56000, MALAYSIA\\
\small \href{mailto:zijianzeng@foxmail.com}{zijianzeng@foxmail.com}\quad
\href{mailto:1002266693@ucsiuniversity.edu.my}{1002266693@ucsiuniversity.edu.my}}
\date{August 27, 2026}
\hypersetup{
  pdftitle={Generic Characteristic-Zero Equivalence Between Derivative Bezout Inversion and Multipoint Evaluation},
  pdfauthor={Zijian Zeng},
  pdfsubject={Arithmetic complexity of a canonical polynomial Bezout pair},
  pdfkeywords={Bezout identity, multipoint evaluation, interpolation, algebraic complexity}
}

\begin{document}
\maketitle

\begin{abstract}
Let $a_1,\ldots,a_m$ be distinct elements of a field and let
$Z(X)=\prod_i(X-a_i)$.  We study the total arithmetic complexity of the
canonical B\'ezout pair
\[
  sZ+tZ'=1,\qquad \deg t<m,\quad \deg s<m-1.
\]
The standard product-tree route uses $O(\M_{\F}(m)\log m)$ field
operations.  Even when $\M_{\F}(m)=O(m\log m)$, this is
$O(m\log^2m)$ rather than $O(m\log m)$.  Our main result identifies the
missing logarithm.  Over an infinite field of characteristic zero, in the
generic rational straight-line-program model, computing all coefficients of
$(s,t)$ is equivalent, up to $O(\M_{\F}(m))$ operations, to arbitrary-node
multipoint evaluation and to interpolation.  The new direction rests on an
explicit quasi-linear reconstruction of $Z$ from $(s,t)$ on a nonempty
Zariski-open set.  It also transfers Strassen's
$\Omega(m\log m)$ nonscalar lower bound for elementary symmetric functions
to the B\'ezout problem, showing that the requested order would be optimal.
The reconstruction is genuinely characteristic-dependent: in characteristic
$p$, the squarefree polynomials $X^p+cX+d$ all have the same canonical pair
$(0,c^{-1})$ for fixed $c\ne0$.  Thus the all-field, all-input
$O(m\log m)$ question is not resolved here; it is reduced generically in
characteristic zero to the corresponding arbitrary-node evaluation problem.
\end{abstract}

\medskip
\noindent\textbf{Keywords.}
B\'ezout identity; multipoint evaluation; polynomial interpolation;
straight-line program; algebraic complexity; subproduct tree.

\noindent\textbf{2020 Mathematics Subject Classification.}
68W30; 12Y05; 68Q25.

\section{Introduction}

For distinct nodes $a_1,\ldots,a_m\in\F$, the polynomial
$Z=\prod_i(X-a_i)$ is squarefree.  Hence $Z'$ is invertible in
$\F[X]/(Z)$, and there is a unique normalized pair
\begin{equation}\label{eq:bezout}
  s(X)Z(X)+t(X)Z'(X)=1,
  \qquad \deg t<m,\quad \deg s<m-1.
\end{equation}
The question is whether all $2m-1$ coefficients of this pair can be computed
directly from the nodes using $O(m\log m)$ operations in the ground field.

This question sits at a delicate boundary between two complexity measures.
Borodin and Moenck's classical modular-transform algorithms use
$O(m\log^2m)$ total arithmetic operations but only $O(m\log m)$
multiplications in their model~\cite{BorodinMoenck1974}.  The distinction is
essential: an upper bound on nonscalar multiplications is not an upper bound
on all additions, subtractions, multiplications, and divisions.  Conversely,
the usual soft-linear notation hides exactly the logarithm at issue here.

Our contribution has three parts.
\begin{enumerate}[leftmargin=2.1em]
\item We derive a differential reconstruction formula that recovers the root
  polynomial $Z$ from its canonical pair in $O(\M_{\F}(m))$ operations on a
  nonempty open set in characteristic zero.
\item Combining this formula with automatic differentiation and the
  transposition reductions of Bostan and Schost~\cite{BostanSchost2004}, we
  prove that generic B\'ezout-pair computation, arbitrary-node evaluation,
  and arbitrary-node interpolation have the same complexity up to an
  additive polynomial multiplication.
\item We transfer Strassen's $\Omega(m\log m)$ nonscalar lower bound for the
  elementary symmetric functions~\cite{Strassen1973} to the B\'ezout pair.
  A positive-characteristic family then shows why the reconstruction cannot
  be used without its hypotheses.
\end{enumerate}

The result is deliberately scoped.  It is neither an $O(m\log m)$ algorithm
for general nodes nor a lower bound excluding one.  The standard general
subproduct-tree bound remains $O(\M_{\F}(m)\log m)$, as also stated in a
recent treatment over arbitrary fields~\cite{vanderHoevenLecerf2025};
quasi-linear algorithms are known for important special node
sets~\cite{BostanSchost2005}, not for arbitrary input nodes.  A literature
search through August 27, 2026 found no prior statement of the reconstruction
or the resulting equivalence, but this negative search is not a proof of
novelty.

\section{Models and the classical upper bound}\label{sec:model}

\subsection{Arithmetic and bit models}

An \emph{arithmetic operation} is one addition, subtraction, multiplication,
or division in the stated field.  We write $\M_{\F}(n)$ for an admissible,
superlinear upper bound for multiplying two degree-$<n$ polynomials over
$\F$.  Standard closure assumptions are understood: truncation, balanced
sums, and geometric sequences of problem sizes cost $O(\M_{\F}(n))$.

A \emph{rational straight-line program} (SLP) over an infinite field
$\K$ starts from input coordinates and constants in $\K$ and uses the four
arithmetic operations.  A generic SLP need only be defined on a nonempty
Zariski-open subset.  This is weaker than an algorithm that succeeds for
every tuple of distinct nodes.  We count total SLP size unless the phrase
\emph{nonscalar complexity} is used; in that latter measure, additions and
multiplications by constants are free.

The field-operation model says nothing by itself about bit complexity.  Over
$\mathbb Q$, number fields, or finite fields, a bit bound must also charge for
the representation and growth of field elements, inversion, and writing the
output.  For a fixed finite field $\F_q$, distinctness imposes $m\le q$, so
an asymptotic claim must specify a family of fields or extensions.  In the
prime field $\F_p$, even one field operation has a bit cost depending on
$\log p$ and on the integer multiplication, reduction, and inversion
algorithms.  An in-field FFT of a given length also requires roots of unity
of that length in $\F_p$, hence an order dividing $p-1$; if an extension is
used instead, its construction and arithmetic costs must be charged.

Nor do we assume FFT-friendly roots of unity.  If the ground field supplies
suitable smooth-order roots at unit arithmetic cost, one may have
$\M_{\F}(m)=O(m\log m)$.  Over arbitrary algebras, Cantor and Kaltofen give
$O(m\log m)$ algebra multiplications together with
$O(m\log m\log\log m)$ additions and subtractions
\cite{CantorKaltofen1991}.  Thus a valid total-operation consequence is
$\M_{\F}(m)=O(m\log m\log\log m)$, not $O(m\log m)$.

\subsection{Canonical form and interpolation}

\begin{lemma}[Canonical pair]\label{lem:canonical}
For pairwise distinct $a_i$, there is a unique pair satisfying
\eqref{eq:bezout}.  It obeys
\begin{equation}\label{eq:values}
  t(a_i)=\frac1{Z'(a_i)},
  \qquad
  t(X)=\sum_{i=1}^m\frac{1}{Z'(a_i)^2}\frac{Z(X)}{X-a_i}.
\end{equation}
\end{lemma}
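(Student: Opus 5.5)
Existence and uniqueness come from the Chinese remainder structure of $\F[X]/(Z)$. Since the $a_i$ are distinct, $Z$ is squarefree, so $\gcd(Z,Z')=1$. Concretely, $Z'(a_i)=\prod_{j\ne i}(a_i-a_j)\neq 0$, hence no $a_i$ is a common root. The extended Euclidean algorithm therefore gives some pair $(s_0,t_0)$ with $s_0Z+t_0Z'=1$.

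Normalization proceeds by division. Write $t_0=qZ+t$ with $\deg t<m$, and set $s=s_0+qZ'$. Then $sZ+tZ'=1$ still holds, and $sZ=1-tZ'$ has degree at most $(m-1)+(m-1)=2m-2$. Since $\deg Z=m$, this forces $\deg s\le m-2$, i.e.\ $\deg s<m-1$.

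For uniqueness, take two normalized pairs and subtract: $(s_1-s_2)Z=-(t_1-t_2)Z'$. Because $\gcd(Z,Z')=1$, $Z$ divides $t_1-t_2$. As $\deg(t_1-t_2)<m$, we get $t_1=t_2$, and then $s_1=s_2$ because $Z\ne0$. Note that $m\ge1$ is needed for the degree conditions to make sense; for $m=1$ the condition $\deg s<0$ means $s=0$, and the argument still goes through.

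For the evaluation formula, substitute $X=a_i$ into \eqref{eq:bezout}. Since $Z(a_i)=0$, this leaves $t(a_i)Z'(a_i)=1$, which is the first claim.

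For the explicit expansion, $t$ has degree $<m$, so it equals its Lagrange interpolant on the $m$ nodes. The Lagrange basis is $L_i(X)=\frac{Z(X)}{(X-a_i)Z'(a_i)}$, using $Z'(a_i)=\prod_{j\ne i}(a_i-a_j)$, which follows from the product rule. Hence
\[
t=\sum_i t(a_i)L_i=\sum_i \frac{1}{Z'(a_i)^2}\,\frac{Z(X)}{X-a_i}.
\]

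None of these steps is a real obstacle. The only care needed is the degree bookkeeping for $s$ and verifying the product formula for $Z'(a_i)$, which is the standard Lagrange interpolation identity.
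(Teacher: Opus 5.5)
Your proof is correct and follows essentially the same route as the paper: invertibility of $Z'$ modulo $Z$ (realized concretely via the extended Euclidean algorithm and a division step) gives existence and uniqueness of $t$ with $\deg t<m$, hence of $s$, and then evaluation at the nodes and Lagrange interpolation give \eqref{eq:values}. You just spell out the degree bookkeeping and the uniqueness argument that the paper compresses into the phrase ``unique representative.''
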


\begin{proof}
Squarefreeness makes $Z'$ a unit modulo $Z$.  Its inverse has a unique
representative $t$ of degree $<m$, and then
$s=(1-tZ')/Z$ is unique and has degree at most $m-2$.  Evaluating
\eqref{eq:bezout} at $a_i$ gives the first identity in
\eqref{eq:values}.  Interpolation in the Lagrange basis
$Z/((X-a_i)Z'(a_i))$ gives the second.
\end{proof}

Formula~\eqref{eq:values} exposes the standard algorithm.  A product tree
constructs $Z$; a subproduct remainder tree evaluates $Z'$ at the nodes; the
$m$ nonzero values are inverted; interpolation constructs $t$; and one exact
division gives $s$.  Polynomial differentiation is linear-time, while each
tree level costs $O(\M_{\F}(m))$.  Consequently
\begin{equation}\label{eq:baseline}
  \text{B\'ezout-pair cost}=O(\M_{\F}(m)\log m)
\end{equation}
over every field.  Fast extended gcd or modular inversion gives the same
asymptotic baseline.  If $\M_{\F}(m)=O(m\log m)$, then
\eqref{eq:baseline} is $O(m\log^2m)$, not the requested bound.  With the
arbitrary-algebra bound above it is
$O(m\log^2m\log\log m)$.

\section{Recovering the root polynomial}\label{sec:reconstruction}

For a polynomial $f$ with degree cap $d$, set
$\rev_d(f)=x^d f(1/x)$.  Define
\begin{equation}\label{eq:reversals}
 A=\rev_m(Z),\qquad T=\rev_{m-1}(t),\qquad S=\rev_{m-2}(s).
\end{equation}
In particular $A(0)=1$.

\begin{theorem}[Differential reconstruction]\label{thm:reconstruction}
Let $m\ge3$ and let $\K$ have characteristic zero.  Put
\[
  \tau=T(0)=[X^{m-1}]t.
\]
On the locus $\tau\ne0$, the canonical pair determines $Z$ by
\begin{equation}\label{eq:reconstruction}
 A=\exp\!\left(\int \frac{C}{T}\,dx\right)\pmod{x^{m+1}},
 \qquad C=\frac{mT+S}{x}.
\end{equation}
The right side is well defined, and all coefficients of $Z$ are recovered in
$O(\M_{\K}(m))$ arithmetic operations.
\end{theorem}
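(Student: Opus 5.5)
The plan is to reverse the B\'ezout identity~\eqref{eq:bezout} into a first-order linear differential equation for $A$ with coefficients built from $S$ and $T$. Under $\tau\ne0$ this equation determines the logarithmic derivative $A'/A$ to sufficient precision, and a truncated logarithm and exponential then recover $A$, hence $Z$.

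\emph{Step 1: reversing the identity.} Substitute $X=1/x$ in \eqref{eq:bezout} and multiply by $x^{2m-2}$. Since $\deg s\le m-2$, $\deg Z=m$, $\deg t\le m-1$ and $\deg Z'=m-1$, this gives
\[
 S\,A+T\,\rev_{m-1}(Z')=x^{2m-2}.
\]
Next express $\rev_{m-1}(Z')$ through $A$. Differentiating $A(x)=x^mZ(1/x)$ gives $xA'=mA-x^{m-1}Z'(1/x)$, that is, $\rev_{m-1}(Z')=mA-xA'$. Substituting yields
\[
 (S+mT)A-xTA'=x^{2m-2}.
\]

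\emph{Step 2: $C$ is a polynomial and $A$ satisfies an equation mod $x^{2m-3}$.} Evaluate the last identity at $x=0$. Because $A(0)=1$ and $2m-2\ge1$, this gives $S(0)+m\tau=0$. So $x$ divides $mT+S$ and $C$ is a polynomial. Dividing by $x$ then gives
\[
 CA-TA'=x^{2m-3}.
\]
On $\tau\ne0$ the series $T$ is a unit in $\K[[x]]$, and $A$ is a unit since $A(0)=1$. Hence
\[
 \frac{A'}{A}=\frac{C}{T}-\frac{x^{2m-3}}{TA}\equiv\frac{C}{T}\pmod{x^{2m-3}}.
\]
Since $m\ge3$, we have $2m-3\ge m$, so $A'/A\equiv C/T \pmod{x^m}$. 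This is exactly the precision needed: integrating determines $\log A$ modulo $x^{m+1}$, with zero constant term because $A(0)=1$. Characteristic zero is used here, since the integration divides by $1,\ldots,m$ and the truncated $\exp$ and $\log$ require these inverses. The formula~\eqref{eq:reconstruction} follows, and both sides are well defined. Because $\deg A\le m$, $A$ modulo $x^{m+1}$ is $A$ itself, and $Z=\rev_m(A)$.

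\emph{Step 3: cost.} The computation consists of forming $C$ ($O(m)$ operations), inverting $T$ modulo $x^m$ by Newton iteration, one truncated multiplication, a term-wise integration ($O(m)$ operations), one power-series exponential to precision $m+1$, and a reversal. Each of these costs $O(\M_{\K}(m))$ under the stated closure assumptions. The total is therefore $O(\M_{\K}(m))$.

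The step I expect to need the most care is the precision bookkeeping in Step~2. One must check that the error term $x^{2m-3}$ lies beyond the $x^{m-1}$ coefficient of $A'/A$; this is where the hypothesis $m\ge3$ enters. One must also check that the divisibility of $mT+S$ by $x$ follows from the identity itself rather than from an extra assumption.
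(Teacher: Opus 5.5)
Your proposal is correct and follows the paper's proof essentially step for step. You reverse the B\'ezout identity to $T(mA-xA')+SA=x^{2m-2}$, read off $x\mid mT+S$ from the constant term, divide by $x$ to get $CA-TA'=x^{2m-3}$, use $m\ge3$ to reduce to $A'/A\equiv C/T \pmod{x^m}$, and then integrate, exponentiate and count Newton-iteration costs; your explicit error term $-x^{2m-3}/(TA)$ and the remark that $\deg A\le m$ are slightly more detailed versions of the same bookkeeping.
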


\begin{proof}
Differentiating $Z(1/x)=x^{-m}A(x)$ gives
\[
  Z'(1/x)=x^{1-m}\bigl(mA-xA'\bigr).
\]
Reverse \eqref{eq:bezout} using the degree caps in
\eqref{eq:reversals}.  After multiplication by $x^{2m-2}$, the result is
\begin{equation}\label{eq:reversebezout}
 T(mA-xA')+SA=x^{2m-2}.
\end{equation}
The constant term of the left side is $mT(0)+S(0)$, whereas the right side
has zero constant term.  Hence $mT+S$ is divisible by $x$, so $C$ in
\eqref{eq:reconstruction} is a polynomial.  Dividing
\eqref{eq:reversebezout} by $x$ gives
\[
  CA-TA'=x^{2m-3}.
\]
Because $2m-3\ge m$, reduction modulo $x^m$ yields
\begin{equation}\label{eq:logder}
  \frac{A'}A=\frac CT\pmod{x^m}.
\end{equation}
On $\tau\ne0$, $T$ is a unit in $\K[[x]]$; $A$ is a unit because $A(0)=1$.
Integration in characteristic zero and the normalization $A(0)=1$ turn
\eqref{eq:logder} into \eqref{eq:reconstruction}.  Truncated inversion,
integration, and exponential each cost $O(\M_{\K}(m))$ by the usual Newton
algorithms for power series~\cite{BrentKung1978,vonZurGathenGerhard2013}.
Reversing $A$ recovers $Z$.
\end{proof}

\begin{lemma}\label{lem:open}
The condition $\tau\ne0$ defines a nonempty Zariski-open subset of the
distinct-node locus.
\end{lemma}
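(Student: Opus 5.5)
Our plan is to express $\tau$ as a polynomial in the nodes, show that it is not the zero polynomial, and intersect with the distinct-node locus. By \eqref{eq:values},
\[
  \tau=[X^{m-1}]t=\sum_{i=1}^m\frac{1}{Z'(a_i)^2},
\]
because $Z(X)/(X-a_i)$ is monic of degree $m-1$. Multiply by the square of the discriminant, or simply by $\prod_i Z'(a_i)^2$. This gives
\[
  \Delta(a)=\Bigl(\prod_{j}Z'(a_j)^2\Bigr)\tau=\sum_i\prod_{j\ne i}Z'(a_j)^2 ,
\]
which is a polynomial in $a_1,\ldots,a_m$. On the distinct-node locus $U=\{\prod_{i<j}(a_i-a_j)\ne0\}$, the prefactor is nonzero. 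Hence $\{\tau\ne0\}\cap U=\{\Delta\ne0\}\cap U$, and this set is Zariski-open in $U$, which is itself open in affine $m$-space. Since $\K$ is infinite, the set is nonempty as soon as $\Delta\not\equiv0$: a nonzero polynomial times $\prod_{i<j}(a_i-a_j)$ is a nonzero polynomial, so it has a non-root.

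The main step is to exhibit a single tuple of distinct nodes in $\K$ with $\tau\ne0$. Since $\K$ has characteristic zero, we have $\mathbb Q\subseteq\K$, and we will work with the nodes $a_i=i$ for $i=1,\ldots,m$. For these nodes,
\[
  Z'(i)=\prod_{j\ne i}(i-j)=(-1)^{m-i}(i-1)!\,(m-i)!,
\]
so
\[
  \tau=\sum_i\frac{1}{((i-1)!(m-i)!)^2}.
\]
This is a sum of strictly positive rationals, so it is nonzero in $\mathbb Q\subseteq\K$. An alternative argument uses the real nodes: each $1/Z'(a_i)^2>0$, so the sum is positive. The integer nodes are preferable because they lie in every characteristic-zero field.

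Combining the pieces gives the proof. The condition $\tau\ne0$ is the nonvanishing locus of the regular function $\tau=\Delta/\prod_j Z'(a_j)^2$ on $U$, so it is open. It contains $(1,2,\ldots,m)$, so it is nonempty. The only care needed is that \eqref{eq:values} gives $\tau$ as a rational function whose denominators are nonvanishing on $U$, which is exactly what the rewriting via $\Delta$ handles. The positivity argument is the essential use of characteristic zero. In characteristic $p$ the sum can vanish, consistent with the $X^p+cX+d$ example, where $t=c^{-1}$ is constant and so $[X^{p-1}]t=0$.
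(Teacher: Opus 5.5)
Your proposal is correct and follows essentially the paper's argument. Both write $\tau=\sum_i 1/Z'(a_i)^2$ and observe that at distinct rational nodes (your $a_i=i$) every summand is a positive rational, so $\tau\not\equiv 0$. Your clearing of denominators by $\prod_j Z'(a_j)^2$ just makes explicit the openness step that the paper leaves implicit.
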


\begin{proof}
The leading coefficient in \eqref{eq:values} is
\begin{equation}\label{eq:tau}
  \tau=\sum_{i=1}^m\frac1{Z'(a_i)^2}.
\end{equation}
For any distinct rational real nodes every summand is positive.  Therefore
the rational function in \eqref{eq:tau} is not identically zero.
\end{proof}

The open-set qualification cannot simply be removed.  For the nodes
$0,1,r$ one obtains
\[
 \tau=\frac{2(r^2-r+1)}{r^2(r-1)^2},
\]
which vanishes over a characteristic-zero field containing a root of
$r^2-r+1$.  The small cases $m=1,2$ are direct and do not affect asymptotic
generic complexity.

\section{Equivalence with evaluation and interpolation}\label{sec:equiv}

Fix an infinite characteristic-zero field $\K$.  Let $\B(m)$ be the minimum
generic rational-SLP size for computing all coefficients of the canonical
pair from arbitrary input nodes.  Let $\E(m)$ be the minimum size of a
generic SLP for
\[
  (a_1,\ldots,a_m;p_0,\ldots,p_{m-1})
   \longmapsto (P(a_1),\ldots,P(a_m)),
  \quad P=\sum_{j<m}p_jX^j,
\]
that is linear in the coefficient vector $p$.  Define $\I(m)$ analogously for
interpolation, with the value vector as the linear input.  The nodes are part
of the input: no preprocessing of a fixed special set is allowed.

\begin{theorem}[Generic equivalence]\label{thm:equivalence}
In the model above,
\begin{equation}\label{eq:equivalence}
 \B(m)+\M_{\K}(m)
 =\Theta\bigl(\E(m)+\M_{\K}(m)\bigr)
 =\Theta\bigl(\I(m)+\M_{\K}(m)\bigr).
\end{equation}
\end{theorem}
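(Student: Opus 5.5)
I will prove a cycle of reductions, each costing an additive $O(\M_{\K}(m))$: $\E\le \I+O(\M)$ and $\I\le\E+O(\M)$ via transposition; $\B\le \I+O(\M)$ via \eqref{eq:values}; and $\E\le \B+O(\M)$ via \cref{thm:reconstruction} together with automatic differentiation. Chaining these gives all three $\Theta$-relations in \eqref{eq:equivalence}.

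\emph{Evaluation versus interpolation.} I first show that $Z$ itself costs $O(\E(m)+\M_{\K}(m))$. For this I take the transposed evaluation program, which by the Tellegen principle has the same size up to $O(m)$ in the linear input. Applied to the all-ones vector it returns the power sums $\sum_i a_i^j$ for $j<m$, hence for $j\le m$ after one more evaluation-type step. Newton's identities, computed by a power-series exponential, then give $A$, and hence $Z$, in $O(\M)$. With $Z$ in hand, I follow Bostan--Schost~\cite{BostanSchost2004}.
\begin{itemize}[leftmargin=2.1em]
\item Interpolation reduces to evaluating $Z'$, scaling the values by $1/Z'(a_i)$, and applying transposed evaluation. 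The transposed evaluation yields $\sum_i v_i a_i^j$, from which the interpolant is one truncated product with $A$ followed by a reversal.
\item Conversely, evaluation is the transpose of $v\mapsto\sum_i v_i/(1-a_ix)\bmod x^m$. This map equals interpolation composed with multiplication by $A$, up to the diagonal $Z'(a_i)$.
\item In both directions, linearity in the second argument is preserved, and the node-dependent nonlinear parts are either $O(\M)$ or one extra call to the same primitive.
\end{itemize}

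\emph{$\B\le\I+O(\M)$.} By \eqref{eq:values}, $t$ interpolates the values $1/Z'(a_i)$. These values are obtained by one evaluation, which is $\le\I+O(\M)$ by the previous step, followed by $m$ inversions. Then $s=(1-tZ')/Z$ is one exact division, costing $O(\M)$.

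\emph{$\E\le\B+O(\M)$, the new direction.} Given an SLP for $(s,t)$, \cref{thm:reconstruction} yields $Z$, hence the power sums $p_j=\sum_i a_i^j$ via a logarithmic derivative of $A$, all in $O(\M)$ on the nonempty open set of \cref{lem:open}. To obtain evaluation, I apply Baur--Strassen automatic differentiation to the scalar program
\[
F(a;c)=\sum_i \log\text{-free form }\sum_{j<m} c_j\,\frac{p_{j+1}}{j+1}.
\]
Concretely, let $G(a;c)=\sum_j c_j\sum_i a_i^{j+1}/(j+1)$. It is computable in $\B(m)+O(\M)$ from the power sums, plus an $O(m)$ linear combination. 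Its gradient with respect to $a$ is $\partial G/\partial a_i=\sum_j c_ja_i^j=P(a_i)$, and Baur--Strassen gives the whole gradient at cost $O(\B(m)+\M)$.

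Two points need checking. First, the gradient program must be linear in $c$. This holds because $G$ is linear in $c$, so that after making the constant-free portion independent of $c$, each derivative is linear; alternatively one can restrict to $c$-homogeneous degree-one parts, which is standard. Second, generic validity must survive: divisions occur only by $\tau$ and by nonzero quantities, so the program is defined on a nonempty open set intersected with $\tau\ne0$.

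I expect the main obstacle to be this last reduction. Automatic differentiation must be applied through the rational SLP for $(s,t)$, including its divisions, while keeping the output an SLP linear in the coefficient vector $p$. The $m\ge3$ restriction and the small cases are absorbed into constants.
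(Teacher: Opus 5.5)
Your argument is correct in substance, and its new direction $\E(m)=O(\B(m)+\M_{\K}(m))$ takes a route dual to the paper's.

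The paper differentiates forward. It attaches a tangent $c_i$ to each node, so forward-differentiating the pair program followed by the reconstruction yields $\dot Z=-\sum_i c_i\prod_{j\ne i}(X-a_j)$. This is the $\LinComb$ map, and the paper then invokes the Bostan--Schost transposition reduction from $\LinComb$ to evaluation.

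You differentiate backward. Since $\partial\bigl(p_{j+1}/(j+1)\bigr)/\partial a_i=a_i^j$, the Jacobian of the scaled power-sum map is the transposed Vandermonde matrix. Hence the Baur--Strassen gradient $\nabla_aG$ of $G=\sum_j c_jp_{j+1}/(j+1)$ is exactly $(P(a_i))_i$. Differentiation and transposition happen in one step, you land on evaluation directly, and you need no appeal to the $\LinComb$ reduction.

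Your linearity and genericity checks are sound once phrased this way. The forward sweep does not involve $c$. The backward sweep is seeded by $\bar q_j=c_j$ and only multiplies adjoints by node-dependent partials. It also divides only by quantities the forward program already divides by.

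A small bonus of your route is that the exponential in the reconstruction theorem is not needed: $-C/T\equiv\sum_{k<m}p_{k+1}x^k\pmod{x^m}$ already yields $p_1,\ldots,p_m$. The paper's route, in exchange, uses only forward-mode differentiation, whose constant overhead and linearity in $c$ are immediate, and delegates the transposition to a cited result. Your displayed $F$ is garbled, but the $G$ that replaces it is the right object.

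One repair is needed in your recap of the evaluation/interpolation equivalence. In the bullet for $\E(m)=O(\I(m)+\M_{\K}(m))$ you use $Z$ obtained from the power sums, which costs evaluations. The diagonal $Z'(a_i)$ is itself an evaluation. As written, this direction is therefore circular, and your bound $\B(m)=O(\I(m)+\M_{\K}(m))$ inherits the problem. You can fix this in either of two ways.

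\begin{enumerate}
\item Cite Bostan--Schost for both directions, as the paper does.
\item Obtain both node-dependent ingredients from interpolation alone. $Z$ is $(X-a_1)$ times the interpolant of the value vector $\bigl(\prod_{j\ge2}(a_1-a_j),0,\ldots,0\bigr)$, exactly as in the paper's proof of the upper bound on $\B(m)$. The vector $(1/Z'(a_i))_i$ is one transposed interpolation of the last unit vector, because $\sum_i a_i^j/Z'(a_i)=\delta_{j,m-1}$ for $0\le j<m$.
\end{enumerate}
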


\begin{proof}
Bostan and Schost prove in characteristic zero that evaluation and
interpolation are equivalent up to a constant number of polynomial
multiplications~\cite[Theorems 1 and 2]{BostanSchost2004}:
\begin{equation}\label{eq:EI}
 \E(m)=O(\I(m)+\M_{\K}(m)),\qquad
 \I(m)=O(\E(m)+\M_{\K}(m)).
\end{equation}

First suppose an interpolation program is available.  At the nodes, the
value vector
\[
 \left(\prod_{j=2}^m(a_1-a_j),0,\ldots,0\right)
\]
interpolates to $\prod_{j=2}^m(X-a_j)$.  One multiplication by $X-a_1$
therefore gives $Z$.  Use the evaluation program furnished by
\eqref{eq:EI} to evaluate $Z'$ at all nodes, invert the $m$ nonzero values,
and interpolate once more to obtain $t$ from \eqref{eq:values}.  Finally,
$s=(1-tZ')/Z$ is an exact quotient.  Fast exact division costs
$O(\M_{\K}(m))$.  Hence
\begin{equation}\label{eq:Bupper}
 \B(m)=O(\I(m)+\M_{\K}(m))
      =O(\E(m)+\M_{\K}(m)).
\end{equation}

For the converse, compose a size-$\B(m)$ pair program with
\cref{thm:reconstruction}.  This computes the coefficients of $Z$ on a
nonempty open set in size $O(\B(m)+\M_{\K}(m))$.  Give each node an
independent tangent $c_i$ and forward-different every SLP instruction.
The product and quotient rules have constant SLP size, and every tangent is
linear in $c=(c_1,\ldots,c_m)$.  The output tangent polynomial is
\begin{equation}\label{eq:lincomb}
 \dot Z
 =\sum_{i=1}^m c_i\frac{\partial Z}{\partial a_i}
 =-\sum_{i=1}^m c_i\prod_{j\ne i}(X-a_j).
\end{equation}
Thus we obtain, up to sign, the $\LinComb$ map of
Bostan--Schost.  Their transposition reduction from $\LinComb$ to arbitrary-
node evaluation adds $O(\M_{\K}(m))$ operations
\cite[Section 6]{BostanSchost2004}.  It follows that
\begin{equation}\label{eq:Eupper}
 \E(m)=O(\B(m)+\M_{\K}(m)).
\end{equation}
Combining \eqref{eq:EI}, \eqref{eq:Bupper}, and \eqref{eq:Eupper} proves
\eqref{eq:equivalence}.
\end{proof}

\begin{corollary}\label{cor:target}
Whenever the multiplication regime satisfies
$\M_{\K}(m)=O(m\log m)$, generic B\'ezout-pair computation has an
$O(m\log m)$ total-operation algorithm if and only if arbitrary-node
multipoint evaluation does, and if and only if arbitrary-node interpolation
does.
\end{corollary}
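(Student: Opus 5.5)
The corollary should follow from \cref{thm:equivalence} by substituting the hypothesis $\M_{\K}(m)=O(m\log m)$ into \eqref{eq:equivalence} and reading off each direction. Since every problem must at least read its $m$ inputs and write its outputs, each of $\B(m)$, $\E(m)$, $\I(m)$ is $\Omega(m)$. Under the hypothesis, the additive term $\M_{\K}(m)$ is itself $O(m\log m)$, so it can never push a quantity above the target order.

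For the forward direction, suppose $\B(m)=O(m\log m)$. The left side of \eqref{eq:equivalence} is then $O(m\log m)$, and the $\Theta$ relation gives $\E(m)\le \E(m)+\M_{\K}(m)=O(\B(m)+\M_{\K}(m))=O(m\log m)$. The same bound holds for $\I(m)$. Conversely, if $\E(m)=O(m\log m)$, then $\B(m)\le \B(m)+\M_{\K}(m)=O(\E(m)+\M_{\K}(m))=O(m\log m)$. The implications between $\E$ and $\I$ are immediate from \eqref{eq:EI} in the same way. Chaining these implications gives the three-way equivalence.

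The only point needing care is interpretation rather than computation. An "$O(m\log m)$ total-operation algorithm" for the Bézout pair has to be read in the generic rational-SLP model in which $\B$ is defined, meaning a program defined on a nonempty Zariski-open set and with nodes as genuine inputs. Likewise, "arbitrary-node evaluation" has to be read in the model of $\E$, which is linear in the coefficient vector $p$. For interpolation, the linearity in the value vector is what makes the transposition arguments of \cite{BostanSchost2004} applicable inside \cref{thm:equivalence}.

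So the main step is to state that the corollary is relative to these generic SLP models, and does not concern all-input algorithms. With that interpretation fixed, the argument is just the substitution above. It needs no estimate beyond $\M_{\K}(m)=O(m\log m)$ and the $\Theta$ relations already proved in \cref{thm:equivalence}.
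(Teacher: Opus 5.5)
Your proposal is correct and matches the paper. The paper gives no separate proof: it treats the corollary as an immediate reading of \eqref{eq:equivalence} once $\M_{\K}(m)=O(m\log m)$ makes the additive term harmless, and right after the statement it adds the same generic-SLP caveat you give. Your $\Omega(m)$ remark about reading inputs is not used in any of the implications and can be dropped.
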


This is an equivalence, not an assertion that any of the three target-time
algorithms is known.  It also concerns generic programs; the exceptional
example following Lemma~\ref{lem:open} prevents an immediate all-input conclusion.

\section{Optimality in nonscalar complexity}\label{sec:lower}

For clarity, let $L_{\K}(f_1,\ldots,f_r)$ denote Strassen's measure: the
minimum number of multiplications and divisions in a branch-free computation
of rational functions over an infinite ground field $\K$, with additions,
subtractions, and multiplication by elements of $\K$ free.  Such a
computation is an identity in a rational function field and therefore need
only be defined on a nonempty Zariski-open set.  Strassen's Theorem~3.1 gives,
for the elementary symmetric functions $e_1,\ldots,e_m$,
\begin{equation}\label{eq:strassen}
 L_{\K}(e_1,\ldots,e_m)\ge m\log_2(m/e)
 \qquad (\K\ \text{infinite}).
\end{equation}
See~\cite{Strassen1973}.  This is precisely a nonscalar rational-SLP lower
bound, not a lower bound for branching algorithms or finite-domain models.
Since the coefficients of $Z$ are the $e_i$ up to signs, the reconstruction
transfers the bound within the generic model.

\begin{proposition}\label{prop:lower}
Every generic rational SLP that computes all coefficients of the canonical
pair over a characteristic-zero field has nonscalar complexity
$\Omega(m\log m)$.  Consequently its total arithmetic complexity is also
$\Omega(m\log m)$.
\end{proposition}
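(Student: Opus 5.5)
The plan is a reduction: compose a hypothetical cheap pair program with the reconstruction of \cref{thm:reconstruction}, and contradict Strassen's bound \eqref{eq:strassen}. Let $\Gamma$ be a generic rational SLP with nonscalar complexity $L$ that computes the coefficients of $(s,t)$ on some nonempty Zariski-open set $U$ of nodes. Append to $\Gamma$ the power-series program of \cref{thm:reconstruction}. The new program forms $T$, $S$ and $C=(mT+S)/x$ by coefficient reindexing and linear combinations with integer constants; all of this is scalar and hence free. It then computes $C/T \bmod x^m$, integrates, takes the exponential modulo $x^{m+1}$, and reverses. The combined program outputs the coefficients of $Z$, which are $\pm e_1,\ldots,\pm e_m$, on $U\cap\{\tau\ne0\}$. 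By \cref{lem:open} this set is a nonempty Zariski-open set, being the intersection of two nonempty opens in the irreducible affine space $\K^m$. Negating outputs is free, so we obtain a rational SLP for $e_1,\ldots,e_m$ in the sense of Strassen's measure.

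Next I will bound the nonscalar cost of the appended part. Newton iteration for inversion and exponential uses $O(\M_{\K}(m))$ total operations, and hence at most that many nonscalar ones. Integration divides by the integers $1,\ldots,m$, which are nonzero in characteristic zero, so it costs nothing nonscalar. This is the delicate point: Strassen's bound is $\Omega(m\log m)$, while $\M_{\K}(m)$ may be $\Theta(m\log m)$, or even larger in total operations. Charging the reconstruction at its total cost would therefore swamp the conclusion. The main obstacle is to show that the appended step has nonscalar complexity $O(m)$. I would argue this via Fourier/evaluation–interpolation multiplication over $\K$. Since $\K$ is infinite, a product of two degree-$<n$ polynomials can be computed by evaluating at $2n-1$ distinct constants, a linear map with constant coefficients and hence free. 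One then makes $2n-1$ pointwise multiplications and interpolates, again linear over $\K$ and free. So the nonscalar cost of multiplication is $O(n)$. Newton iteration for inversion and exponential performs a constant number of multiplications at geometrically growing precisions, giving nonscalar cost $O(m)$. Inverting a single power-series coefficient counts as one division.

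Combining the two pieces yields $L+O(m)\ge L_{\K}(e_1,\ldots,e_m)\ge m\log_2(m/e)$, so $L=\Omega(m\log m)$. Finally, total arithmetic size dominates nonscalar complexity, because every multiplication or division counted by $L$ is also one instruction of the SLP. This gives the second sentence of the proposition. I also need to check that Strassen's Theorem~3.1 applies to programs defined only on an open set. It does, because such programs are identities in the rational function field $\K(a_1,\ldots,a_m)$, as the paper noted before \eqref{eq:strassen}.
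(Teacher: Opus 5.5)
Your proposal is correct and follows essentially the same route as the paper's proof. You compose the pair program with the reconstruction of \cref{thm:reconstruction}, bound the appended power-series steps by $O(m)$ nonscalar operations using evaluation--interpolation multiplication at $2n-1$ fixed constants inside Newton iterations at geometrically growing precisions, and then invoke Strassen's bound \eqref{eq:strassen} together with the observation that total size dominates nonscalar size.
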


\begin{proof}
Fix a characteristic-zero field $\K$.  For every precision $n$, it contains
$2n-1$ distinct constants.  Multiplication of two degree-$<n$ polynomials can
therefore be performed by evaluation at these fixed constants, $2n-1$
pointwise products, and interpolation.  The two transforms are
$\K$-linear, hence free in Strassen's measure; the multiplication uses at
most $2n-1$ nonscalar operations.

Choose standard doubling Newton circuits for truncated inverse and
exponential.  There is an absolute constant $q$ such that each doubling to
precision $n$ uses at most $q$ truncated polynomial products, together with
linear operations and a bounded number of nonscalar divisions.  Replacing
each product by the fixed-node circuit above and summing the geometric
precisions shows that the entire reconstruction uses at most $Cm$
nonscalar operations, for one constant $C$ independent of $m$.  Reversal,
differentiation, division by $x$, and integration are linear; the divisions
by integration indices are scalar operations and are legal in characteristic
zero.

Let $L_{\mathrm{pair},\K}(m)$ be the minimum nonscalar size of a generic
rational SLP for the canonical pair.  Composing such a program with the
reconstruction computes $e_1,\ldots,e_m$, so \eqref{eq:strassen} gives the
pointwise inequality
\[
 L_{\mathrm{pair},\K}(m)+Cm\ge m\log_2(m/e).
\]
Consequently
\[
 L_{\mathrm{pair},\K}(m)
 \ge m\log_2m-(C+\log_2e)m
 \ge \tfrac12m\log_2m
\]
for all sufficiently large $m$.  This proves the claimed
$\Omega(m\log m)$ nonscalar lower bound.  Since every nonscalar operation is
a field operation, the same lower bound holds for total rational-SLP size.
\end{proof}

Within this generic branch-free model, the lower bound matches the order
asked for; it proves optimality if the target algorithm exists, but it does
not exclude such an algorithm.  No branching, finite-domain, or
positive-characteristic lower bound is asserted.

\section{Positive characteristic and finite fields}\label{sec:positive}

The characteristic-zero hypothesis in the reconstruction is structural, not
merely an artifact of division by the integration indices.

\begin{proposition}[Failure of reconstruction]\label{prop:positive}
Let $p>0$ be prime and work over an algebraically closed field of
characteristic $p$.  For fixed $c\ne0$, every polynomial
\[
  Z_d(X)=X^p+cX+d
\]
is squarefree and splits into $p$ distinct linear factors, but its canonical
pair is $(s,t)=(0,c^{-1})$, independently of $d$.
\end{proposition}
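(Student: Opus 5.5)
The plan is a direct computation in characteristic $p$. First, since $p=0$ in $\K$, the derivative is $Z_d'(X)=pX^{p-1}+c=c$, a nonzero constant. Squarefreeness then follows at once: $\gcd(Z_d,Z_d')=\gcd(Z_d,c)=1$. Because the field is algebraically closed, $Z_d$ splits into linear factors, and squarefreeness forces these $p$ roots to be distinct. So $Z_d$ is exactly of the form $\prod_i(X-a_i)$ with distinct nodes and $m=p$, and Lemma~\ref{lem:canonical} applies. That lemma's proof uses only squarefreeness, not characteristic zero, so the canonical pair is unique.

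Next, check that $(s,t)=(0,c^{-1})$ satisfies \eqref{eq:bezout}: $0\cdot Z_d+c^{-1}\cdot c=1$. The degree constraints also hold. We have $\deg t=0<m=p$, and $s=0$ meets $\deg s<m-1$ under the usual convention $\deg 0=-\infty$. By uniqueness this is the canonical pair, and it clearly does not involve $d$.

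The only step needing care is the small case $p=2$, where $\deg s<m-1=1$ still allows $s=0$, so nothing breaks. It may also be worth remarking that varying $d$ gives infinitely many distinct node sets with the same pair. For instance, the roots of $Z_d$ are translates of one another by roots of $X^p+cX$, shifted by the solutions of $u^p+cu=-d$. Hence no map from the pair back to $Z$, of the kind \eqref{eq:reconstruction} provides in characteristic zero, can exist here. Consistently, the coefficient $\tau=[X^{m-1}]t$ vanishes for $m=p\ge2$, so these inputs lie outside the locus of Theorem~\ref{thm:reconstruction}.

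There is no real obstacle; the whole argument is the observation $Z_d'=c$.
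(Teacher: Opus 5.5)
Your proof is correct and matches the paper's argument: $Z_d'=c$ is a nonzero constant, which gives squarefreeness, and $0\cdot Z_d+c^{-1}Z_d'=1$ already meets the degree normalization, so the uniqueness in Lemma~\ref{lem:canonical} makes it the canonical pair. Your extra remarks are correct elaborations that the paper leaves implicit: distinct roots via algebraic closedness, the additive structure of $X^p+cX$, and $\tau=0$ placing these inputs outside the reconstruction locus.
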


\begin{proof}
$Z_d'=c$, so $Z_d$ is squarefree.  The identity
$0\cdot Z_d+c^{-1}Z_d'=1$ already satisfies the degree normalization, hence
is the unique canonical pair.
\end{proof}

Thus $(s,t)$ need not determine $Z$ in positive characteristic, and the hard
direction of \cref{thm:equivalence} does not extend verbatim.  A finite-field
edge case is even more explicit: if $m=q$ and the nodes are all of $\F_q$,
then $Z=X^q-X$ and the pair is $(0,-1)$.  This does not settle the complexity
for growing families with $m<q$; it shows that generic infinite-field lower
bounds cannot be imported into a field-specific finite-domain model without
new quantifiers and arguments.

\section{Complexity consequences and unresolved scope}\label{sec:conclusion}

The original all-field, all-distinct-input question therefore remains open.
The rigorous endpoint is a partial theorem: in characteristic zero and on a
nonempty Zariski-open set, the B\'ezout problem is equivalent to the general
arbitrary-node evaluation/interpolation problem, and in the same generic
branch-free rational-SLP model it has the matching $\Omega(m\log m)$ lower
bound.  Any
complete resolution must either remove the generic qualifier, find the
missing-logarithm evaluation algorithm, or use positive-characteristic
structure that survives
Proposition~\ref{prop:positive}.

\section*{Reproducibility and disclosure}

The algebraic identities were checked on exact rational instances and on
symbolic low-degree cases by two independent scripts included with the
supplementary source archive.  These checks are diagnostic and are not used
as proofs.  Language-model assistance was used for literature triage,
symbolic diagnostics, and editorial revision; every theorem in this note is
supported by the displayed arguments and cited results.  No empirical data
were collected.

For reference, \Cref{tab:bounds} summarizes the principal models.  Every row
counts total field operations unless explicitly marked otherwise.

\begin{table}[ht]
\centering
\small
\begin{tabular}{@{}>{\raggedright\arraybackslash}p{0.27\textwidth}
                    >{\raggedright\arraybackslash}p{0.28\textwidth}
                    >{\raggedright\arraybackslash}p{0.35\textwidth}@{}}
\toprule
Setting & Proven general bound & Consequence for the question \\
\midrule
Arbitrary field, arbitrary nodes
  & $O(\M_{\F}(m)\log m)$
  & Classical product/subproduct-tree upper bound. \\
FFT-friendly multiplication
  & $O(m\log^2m)$
  & Does not give $O(m\log m)$. \\
Arbitrary-algebra multiplication
  & $O(m\log^2m\log\log m)$
  & Counts the Cantor--Kaltofen additions honestly. \\
Characteristic zero, generic SLP
  & Equivalent to evaluation/interpolation up to $O(\M(m))$
  & The $O(m\log m)$ target is exactly the corresponding arbitrary-node target when $\M(m)=O(m\log m)$. \\
Characteristic zero, generic branch-free rational SLP
  & $\Omega(m\log m)$
  & The requested order, if attained, is optimal. \\
Special node families
  & Often $O(\M(m))$
  & Does not cover arbitrary distinct input nodes. \\
\bottomrule
\end{tabular}
\caption{Arithmetic-complexity conclusions.  No bit bound follows without a
field representation and coefficient-height analysis.}
\label{tab:bounds}
\end{table}

\bibliographystyle{abbrv}
\bibliography{references}

@article{BorodinMoenck1974,
  author  = {Allan Borodin and Robert Moenck},
  title   = {Fast Modular Transforms},
  journal = {Journal of Computer and System Sciences},
  volume  = {8},
  number  = {3},
  pages   = {366--386},
  year    = {1974},
  doi     = {10.1016/S0022-0000(74)80029-2}
}

@article{BostanSchost2004,
  author  = {Alin Bostan and {\'E}ric Schost},
  title   = {On the Complexities of Multipoint Evaluation and Interpolation},
  journal = {Theoretical Computer Science},
  volume  = {329},
  number  = {1--3},
  pages   = {223--235},
  year    = {2004},
  doi     = {10.1016/j.tcs.2004.09.002}
}

@article{BostanSchost2005,
  author  = {Alin Bostan and {\'E}ric Schost},
  title   = {Polynomial Evaluation and Interpolation on Special Sets of Points},
  journal = {Journal of Complexity},
  volume  = {21},
  number  = {4},
  pages   = {420--446},
  year    = {2005},
  doi     = {10.1016/j.jco.2004.09.009}
}

@article{BrentKung1978,
  author  = {Richard P. Brent and H. T. Kung},
  title   = {Fast Algorithms for Manipulating Formal Power Series},
  journal = {Journal of the ACM},
  volume  = {25},
  number  = {4},
  pages   = {581--595},
  year    = {1978},
  doi     = {10.1145/322092.322099}
}

@article{CantorKaltofen1991,
  author  = {David G. Cantor and Erich Kaltofen},
  title   = {On Fast Multiplication of Polynomials over Arbitrary Algebras},
  journal = {Acta Informatica},
  volume  = {28},
  number  = {7},
  pages   = {693--701},
  year    = {1991},
  doi     = {10.1007/BF01178683}
}

@article{Strassen1973,
  author  = {Volker Strassen},
  title   = {Die Berechnungskomplexit{\"a}t von elementarsymmetrischen
             Funktionen und von Interpolationskoeffizienten},
  journal = {Numerische Mathematik},
  volume  = {20},
  pages   = {238--251},
  year    = {1973},
  doi     = {10.1007/BF01436566}
}

@article{vanderHoevenLecerf2025,
  author  = {Joris van der Hoeven and Gr{\'e}goire Lecerf},
  title   = {Faster Multi-Point Evaluation over Any Field},
  journal = {Applicable Algebra in Engineering, Communication and Computing},
  year    = {2025},
  note    = {Published online},
  doi     = {10.1007/s00200-025-00704-7}
}

@book{vonZurGathenGerhard2013,
  author    = {Joachim von zur Gathen and J{\"u}rgen Gerhard},
  title     = {Modern Computer Algebra},
  edition   = {3},
  publisher = {Cambridge University Press},
  address   = {Cambridge},
  year      = {2013},
  doi       = {10.1017/CBO9781139856065}
}

\end{document}